\renewcommand*{\backref}[1]{}
\renewcommand*{\backrefalt}[4]{%
  \ifcase #1 %
    \relax
  \or
    $\uparrow$~#2.
  \else
    $\uparrow$~#2.
  \fi%
}
\begin{document}
\thispagestyle{empty}
\selectlanguage{english}
\setcounter{tocdepth}{4}
\setcounter{secnumdepth}{4}

\thispagestyle{empty}
\begin{center}
{\Large\textbf{Equivalence of tensor products over a category of W$^*$-algebras}}{\vspace{0.5cm}}
{\large Ryszard Paweł Kostecki$^{1,2}$, Tomasz Ignacy Tylec$^3$}\\{\vspace{0.4cm}}{\small $^1$ Perimeter Institute for Theoretical Physics, 31 Caroline St North, Waterloo, Ontario N2L 2Y5, Canada}\\{\small $^2$ Institute for Theoretical Physics, University of Warsaw, ul. Pasteura 5, 02-093 Warszawa, Poland}\\{\small $^3$ Institute of Theoretical Physics and Astrophysics, Faculty of Mathematics, Physics and Informatics, University of Gda\'{n}sk, ul. Wita Stwosza 57, 80-308 Gda\'{n}sk, Poland}\\{\vspace{0.25cm}}{\small\texttt{ryszard.kostecki@fuw.edu.pl}, \texttt{ttylec@gmail.com}}\\{\vspace{0.4cm}}{December 20, 2017}{\small\\\ \\} 
\end{center}

\begin{abstract}
\noindent We prove the equivalence of two tensor products  over a category of W$^*$-algebras with normal (not necessarily unital) $^*$-homomorphisms, defined by Guichardet and Dauns, respectively. This structure differs from the standard tensor product construction by Misonou--Takeda--Turumaru, which is based on weak topological completion, and does not have a categorical universality property.
\end{abstract}

\section{Introduction}

The finite dimensional sector of von Neumann's Hilbert space based framework for quantum mechanics \cite{vonNeumann:1932:grundlagen} was reformulated in \cite{Abramsky:Coecke:2004,Selinger:2004,Selinger:2007,Abramsky:Coecke:2009} in terms of symmetric monoidal categories equipped with further structural properties. However, the extension of categorical foundations for quantum mechanics to the infinite dimensional regime (thus, category-theoretisation of the original object of concern of von Neumann) remains an open problem. In this paper we prove that two alternative constructions of a tensor product over a category $\Wsn $ of W$^*$-algebras with normal $^*$-homomorphisms are equivalent. One of them (denoted here by $\otimes_G$) was introduced by Guichardet \cite{Guichardet:1966}, another one (denoted here by $\underline{\otimes}$) was introduced by Dauns \cite{Dauns:1972}. On the other hand, the most popular tensor product structure over W$^*$-algebras is the one defined by Misonou, Takeda, and Turumaru \cite{Misonou:1954,Takeda:1954,Turumaru:1954} as the weak closure of the algebraic tensor product over a Hilbert space defined by tensor product of faithful normal representations of composite W$^*$-algebras. However, this tensor product structure (denoted here as $\overline{\otimes}$) lacks categorical universality property and, furthermore, it is not equivalent with $\underline{\otimes}$ if the composite W$^*$-algebras are not nuclear. This leads us to suggest the symmetric monoidal category $(\Wsn,\underline{\otimes},\CC)$ as a point of departure for further category theoretic axiomatisation of infinite-dimensional quantum mechanics. 

In Section \ref{tensor.products.Wstar.section} we recall the basic facts and definitions of the tensor products over W$^*$-algebras. In Section \ref{cat.tensor.Wstar.section} we present Guichardet's construction, and prove that it is equivalent with Dauns'.
\section{Analytic tensor products of W$^*$-algebras\label{tensor.products.Wstar.section}}
For any two infinite dimensional W$^*$-algebras $\N_1$ and $\N_2$ there exist different inequivalent tensor product structures $\otimes$, allowing to form a compound W$^*$-algebra $\N_1\otimes\N_2$. The variety of these structures arises from different possible ways of introducing a topology on the algebraic tensor product of $\N_1$ and $\N_2$ which makes it into a W$^*$-algebra. If $\N_1$ or $\N_2$ is finite dimensional, then all those tensor product structures coincide. In this section we will review the results of the general theory that allows to deal with the generic infinite dimensional case.

Let $X$ be a Banach space, $M$ a closed subspace
of $X$, and $Y$ a subset of $X$. A Banach dual space of $X$ will be denoted $X^\banach$. Then $M$ and $X/M$ are also Banach spaces. An \df{annihilator} of $Y$ in $X^\banach$ is defined as 
\begin{equation}
	Y^\bot:=\{z\in X^\banach\mid z(x)=0\;\forall x\in X\}
\end{equation}
These objects satisfy \cite{Dieudonne:1942}
\begin{equation}
	M^\banach\iso X^\banach/M^\bot,\;\;\;(X/M)^\banach\iso M^\bot.
\label{annihilator.banach.equation}
\end{equation}
If $X$ is a Banach dual space of some Banach space $Z$, then $Z$ is called a \df{predual} of $X$, and is denoted by $X_\star$.


All $C^*$-algebras in this text are assumed to contain a unit $\II$. The weakly-$\star$ continuous linear maps between W$^*$-algebras will be called \df{normal}.  In particular, as follows from \cite{Bratteli:Robinson:1979}, Prop. 2.4.2 and 2.4.3, the left and right multiplication maps $a\mapsto ab, a\mapsto ba$ are weak-$\star$ continuous.

For any Banach space $X$, there is a canonical isometric embedding map $j_X:X\ra X^\banach{}^\banach$, defined by \cite{Hahn:1927}
\begin{equation}
	(j_X(x))(\phi):=\phi(x)\;\;\forall\phi\in X^\banach\;\forall x\in X.
\end{equation}
If $\C$ is a $C^*$-algebra, then $\C^\banach{}^\banach$ is a W$^*$-algebra, called a \df{universal enveloping W$^*$-algebra}, while $j_\C$ is a $^*$-isomorphism onto a weak-$\star$ dense $^*$-subalgebra of $\C^\banach{}^\banach$ \cite{Sherman:1952,Takeda:1954:conjugate}.

Given two vector spaces $X$ and $Y$ over $\KK\in\{\RR,\CC\}$, $X\boxtimes Y$ will denote the algebraic tensor product of $X$ and $Y$, which is again a vector space over $\KK$. For any vector space $X$ over $\KK$, $\X\boxtimes\KK\iso X$. Given Banach spaces $X$ and $Y$, a norm $\n{\cdot}$ on $X\boxtimes Y$ is called a \df{cross norm} if{}f \cite{Schatten:1943}
\begin{equation}
	\n{x\otimes y}=\n{x}_X\n{y}_Y\;\;\;\forall(x,y)\in X\times Y.
\end{equation}
The completion of $X\boxtimes Y$ in the topology of $\n{\cdot}$ is denoted $X\otimes_{\n{\cdot}}Y$.

For any $C^*$-algebras $\C_1$ and $\C_2$, $\C_1\boxtimes\C_2$ is a $^*$-algebra \cite{Turumaru:1952}. A seminorm $p$ on $\C_1\boxtimes\C_2$ that satisfies $p(x^*x)=p(x)^2$ $\forall x\in\C_1\boxtimes\C_2$ is called a \df{$C^*$-seminorm}. A norm $\n{\cdot}$ on $\C_1\boxtimes\C_2$ that satisfies $\n{x^*x}=\n{x}^2$ $\forall x\in\C_1\boxtimes\C_2$ is called a \df{$C^*$-norm}. Each $C^*$-norm is a cross norm and satisfies  $\n{xy}\leq\n{x}\n{y}$ $\forall x,y\in\C_1\boxtimes\C_2$. A completion of $\C_1\boxtimes\C_2$ in the topology of a $C^*$-norm $\n{\cdot}$ is a $C^*$-algebra, denoted $\C_1\otimes_{\n{\cdot}}\C_2$. The definition of $C^*$-norm $\n{\cdot}$ does not imply the isometric isomorphism $\C_1\otimes_{\n{\cdot}}\C_2\iso\C_2\otimes_{\n{\cdot}}\C_1$ (see \cite{Effros:Lance:1977} for an example). For any $C^*$-algebras $\C_1$ and $\C_2$, if $\phi_1\in\C_1^{\banach+}$ and $\phi_2\in\C_2^{\banach+}$, then $\phi_1\boxtimes\phi_2$ is continuous with respect to any $C^*$-norm on $\C_1\boxtimes\C_2$.

A $C^*$-norm \cite{Guichardet:1965}
\begin{align}
	\n{x}^{C^*}_{\mathrm{max}}:&=\sup\{p(x)\mid p\;\mbox{is a }C^*\mbox{-norm on }\C_1\boxtimes\C_2\}\\
	&=\sup\{p(x)\mid p\;\mbox{is a }C^*\mbox{-seminorm on }\C_1\boxtimes\C_2\},
\end{align}
is `projective' in the following sense: for any $C^*$-algebras $\C_1$ and $\C_2$, and any closed two sided ideal $\mathcal{I}_1\subseteq\C_1^\sa$,
\begin{equation}
	(\C_1/\mathcal{I}_1)\omax\C_2\iso(\C_1\omax\C_2)/(\mathcal{I}_1\omax\C_2).
\end{equation}
If $I$ is any closed two sided ideal in $\C_1\omax\C_2$ such that $(\C_1\boxtimes\C_2)\cap I=\{0\}$, then the quotient norm on $(\C_1\omax\C_2)/I$ is a $C^*$-norm on $\C_1\boxtimes\C_2$. It satisfies the following universal property: let $\C_1$, $\C_2$, $\C$ be $C^*$-algebras, if $\varsigma_i:\C_i\ra \C$, $i\in\{1,2\}$, are $^*$-homomorphisms with pointwise commuting ranges (i.e., for $x\in\varsigma_1(\C_1)$ and $y\in\varsigma(\C_2)$ one has $xy=yx$), then there exists a unique $^*$-homomorphism $\varsigma:\C_1\omax\C_2\ra\C$ such that $\varsigma(x_1\otimes x_2)=\varsigma_1(x_1)\varsigma_2(x_2)$ and $\varsigma(\C_1\omax\C_2)$ is equal to the $C^*$-subalgebra of $\C$ generated by $\varsigma_1(\C_1)$ and $\varsigma_2(\C_2)$. An alternative characterisation of $\n{\cdot}^{C^*}_{\mathrm{max}}$ was given in \cite{Guichardet:1969}. For $\C_1,\C_2,\C$ and $\varsigma_i$ as above, and for $m:\C\otimes\C\ni x\otimes y\mapsto xy\in\C$,
\begin{equation}
	\n{x}^{C^*}_{\mathrm{max}}:=\sup\left\{\n{m\circ(\varsigma_1\boxtimes\varsigma_2)(x)}\mid\C,\;\varsigma_1,\varsigma_2\right\}.
\end{equation}
The universal property of $\n{\cdot}^{C^*}_{\mathrm{max}}$ can be restated as a commutative diagram
\begin{equation}
\xymatrix{%
&\C_1\omax\C_2\ar@{..>}[dd]&\\
\C_1\ar[ur]^{w_1}\ar[dr]_{\varsigma_1}&&\C_2\ar[ul]_{w_2}\ar[dl]^{\varsigma_2}\\
&\C&
}
\end{equation}
where $w_1$ and $w_2$ are $^*$-homomorphisms that are required to satisfy $[w_1(x_1),w_2(x_2)]=0$.


Consider a $C^*$-norm defined by \cite{Turumaru:1952,Turumaru:1953}
\begin{equation}
	\n{x}^{C^*}_{\mathrm{min}}:=\n{(\pi_1\boxtimes\pi_2)(x)}_{\BBB(\H_1\otimes\H_2)},
\end{equation}
where $(\H_1,\pi_1)$ and $(\H_2,\pi_2)$ are faithful representations of $\C_1$ and $\C_2$, respectively. This definition is independent of the choice of particular representations \cite{Misonou:1954,Takeda:1954}. It is `injective' in the following sense: if $\C_3$ and $\C_4$ are $C^*$-subalgebras of $\C_1$ and $\C_2$, respectively, then the embedding $\C_3\boxtimes\C_4\subseteq\C_1\boxtimes\C_2$ extends to an isometric embedding $\C_3\omin\C_4\subseteq\C_1\omin\C_2$. (Because the notions of `projective' and `injective' tensor products for Banach spaces do not coincide with those for $C^*$-algebras, we will avoid using these adjectives.) Every $C^*$-norm $\n{\cdot}$ satisfies \cite{Takesaki:1964}
\begin{equation}
	\n{x}^{C^*}_{\mathrm{min}}\leq
	\n{x}\leq
	\n{x}^{C^*}_{\mathrm{max}}
	\;\;\;\forall x\in\C_1\boxtimes\C_2,
\end{equation}
with lower bound attained if{}f $\C_1$ or $\C_2$ is commutative \cite{Takesaki:1958} (this implies $\mathrm{C}(X)\omin\mathrm{C}(Y)\iso\mathrm{C}(X\times Y)$ for compact Hausdorff spaces $X$ and $Y$ \cite{Turumaru:1952}). Thus, the set of all $C^*$-norms on $\C_1\boxtimes\C_2$ is a complete lattice. A $C^*$-algebra $\C_1$ is called \df{nuclear} if{}f 
\begin{equation}
	\n{x}^{C^*}_{\mathrm{min}}=\n{x}^{C^*}_{\mathrm{max}}\;\;\;\forall x\in\C_1\boxtimes\C_2
\end{equation}
holds for any $C^*$-algebra $\C_2$ \cite{Takesaki:1964,Lance:1973}. 
All finite dimensional and all commutative $C^*$-algebras are nuclear. If $\H$ is an infinite dimensional separable Hilbert space, then $\BH$ is not nuclear \cite{Wassermann:1976}. 

Given W$^*$-algebras $\N_1$ and $\N_2$, and a $C^*$-norm $\n{\cdot}$, the $C^*$-algebra $\N_1\otimes_{\n{\cdot}}\N_2$ is not necessary a W$^*$-algebra. However, one can prove the following lemma.

\begin{lemma}\
Given W$^*$-algebras $\N_1$ and $\N_2$, a $C^*$-norm $\n{\cdot}$ on $\N_1\boxtimes\N_2$, let $Y$ be a closed subspace of $(\N_1\otimes_{\n{\cdot}}\N_2)^\banach$ that is invariant under left and right multiplication by the elements of $\N_1\otimes_{\n{\cdot}}\N_2$. Then
\begin{equation}
	\N_1\otimes_{\n{\cdot},Y}\N_2:=(\N_1\otimes_{\n{\cdot}}\N_2)^\banach{}^\banach/Y^\bot\iso Y^\banach
\label{Wstar.tensor.construction}
\end{equation}
is a W$^*$-algebra.
\end{lemma}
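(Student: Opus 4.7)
Write $\mathcal{A} := \N_1 \otimes_{\n{\cdot}} \N_2$ for brevity. Since $\mathcal{A}$ is a $C^*$-algebra, the universal enveloping W$^*$-algebra $\mathcal{A}^\banach{}^\banach$ has already been recalled above to be a W$^*$-algebra, and the Banach-space isomorphism $\mathcal{A}^\banach{}^\banach/Y^\bot \iso Y^\banach$ is just \eqref{annihilator.banach.equation} applied with $X = \mathcal{A}^\banach$ and $M = Y$. The entire content of the lemma is therefore to transport the W$^*$-structure from $\mathcal{A}^\banach{}^\banach$ to the quotient, and for this it suffices to show that $Y^\bot$ is a weak-$\star$ closed two-sided ideal in $\mathcal{A}^\banach{}^\banach$: the quotient of a W$^*$-algebra by such an ideal is again a W$^*$-algebra (it is, up to isometric $^*$-isomorphism, the weak-$\star$ direct summand complementary to the central projection supporting the ideal).

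The weak-$\star$ closure of $Y^\bot$ is immediate: $Y^\bot$ is the intersection, over $\phi \in Y$, of the kernels of the evaluation functionals $z \mapsto z(\phi)$ on $\mathcal{A}^\banach{}^\banach = (\mathcal{A}^\banach)^\banach$, and each such kernel is weak-$\star$ closed by the very definition of the weak-$\star$ topology. To establish the ideal property at the $\mathcal{A}$-level, I recall that the canonical $\mathcal{A}$-bimodule structure on $\mathcal{A}^\banach$ is $(a \cdot \phi)(b) = \phi(ba)$ and $(\phi \cdot a)(b) = \phi(ab)$, and that the induced action of $\mathcal{A}$ on $\mathcal{A}^\banach{}^\banach$ (namely the Arens/W$^*$-algebra product restricted to $j_\mathcal{A}(\mathcal{A})$) satisfies $(a \cdot z)(\phi) = z(\phi \cdot a)$ for $a \in \mathcal{A}$, $z \in \mathcal{A}^\banach{}^\banach$, $\phi \in \mathcal{A}^\banach$. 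Hence for $z \in Y^\bot$ and $\phi \in Y$ the hypothesis $\phi \cdot a \in Y$ forces $(a \cdot z)(\phi) = 0$; the symmetric computation on the right exhibits $Y^\bot$ as a two-sided $\mathcal{A}$-submodule of $\mathcal{A}^\banach{}^\banach$.

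The remaining step is the extension from $\mathcal{A}$-invariance to $\mathcal{A}^\banach{}^\banach$-invariance. For any fixed $z \in \mathcal{A}^\banach{}^\banach$ the maps $w \mapsto wz$ and $w \mapsto zw$ are weak-$\star$ continuous self-maps of the W$^*$-algebra $\mathcal{A}^\banach{}^\banach$, because W$^*$-algebra multiplication is separately weak-$\star$ continuous. Fixing $z \in Y^\bot$, the previous paragraph shows that both maps carry $j_\mathcal{A}(\mathcal{A})$ into $Y^\bot$; since $j_\mathcal{A}(\mathcal{A})$ is weak-$\star$ dense in $\mathcal{A}^\banach{}^\banach$ (as recalled above) and $Y^\bot$ is weak-$\star$ closed, they carry all of $\mathcal{A}^\banach{}^\banach$ into $Y^\bot$, so $Y^\bot$ is a two-sided ideal and the lemma follows. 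I expect this last extension to be the main obstacle, since it is the only step that uses the W$^*$-structure and Sherman--Takeda density in an essential way; the remaining items (weak-$\star$ closure, the annihilator identification, and the fact that the quotient of a W$^*$-algebra by a weak-$\star$ closed two-sided ideal is a W$^*$-algebra) are either immediate or standard.
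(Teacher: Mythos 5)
Your proof is correct and follows essentially the same route as the paper's: realize the bidual of $\N_1\otimes_{\n{\cdot}}\N_2$ as its universal enveloping W$^*$-algebra, show that $Y^\bot$ is a weak-$\star$ closed two-sided ideal there, and conclude via the annihilator identity \eqref{annihilator.banach.equation} together with the standard fact that a W$^*$-algebra modulo a weak-$\star$ closed two-sided ideal is again a W$^*$-algebra. The only difference is one of detail: the paper merely asserts that $Y^\bot$ is a two-sided ideal, whereas your module-duality computation on $j(\mathcal{A})$ combined with separate weak-$\star$ continuity of multiplication and Sherman--Takeda density supplies exactly the argument the paper leaves implicit.
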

\begin{proof}
From definition, $(\N_1\otimes_{\n{\cdot}}\N_2)^\banach{}^\banach$ is a W$^*$-algebra and $Y^\bot$ is a two sided ideal in it. Last equation follows from the general Banach space property \eqref{annihilator.banach.equation}.
\end{proof}

The special case of the above construction has been used in \cite{Sakai:1971} for $\n{\cdot}=\n{\cdot}^{C^*}_{\mathrm{min}}$ and $Y=(\N_1)_\star\otimes_{(\n{\cdot}^{C^*}_{\mathrm{min}})^\banach}(\N_2)_\star=:(\N_1)_\star\overline{\otimes}_\star(\N_2)_\star$. The resulting tensor product W$^*$-algebra, $\N_1\overline{\otimes}\N_2$, is equivalent with the tensor product of $\N_1$ and $\N_2$ defined in \cite{Misonou:1954} as a von Neumann subalgebra of $\BBB(\H\otimes\K)$ that is a weak closure of $\pi_1(\N_1)\boxtimes\pi_2(\N_2)$ on $\H_1\otimes\H_2$, where $(\H_1,\pi_1)$ and $(\H_2,\pi_2)$ are faithful normal representations of $\N_1$ and $\N_2$, respectively. $\N_1\overline{\otimes}\N_2$ is a weakly-$\star$ dense subspace of $((\N_1)_\star\overline{\otimes}_\star(\N_2)_\star)^\star$, and canonical embedding of the former into the latter is a $^*$-isomorphism. 

Another special case of the construction \eqref{Wstar.tensor.construction} was proposed in \cite{Dauns:1972,Dauns:1978} for $\n{\cdot}=\n{\cdot}^{C^*}_{\mathrm{max}}$ and $Y=:(\N_1)_\star\underline{\otimes}_\star(\N_2)_\star$ defined as a set of all $\phi\in(\N_1\omax\N_2)^\banach$ satisfying $\phi(x\otimes\cdot)\in(\N_2)_\star$ and $\phi(\cdot\otimes y)\in(\N_1)_\star$ $\forall(x,y)\in\N_1\times\N_2$. 

The tensor product W$^*$-algebra, $\N_1\underline{\otimes}\N_2$, satisfies the following property: if $\N_i$, $i\in\{1,\ldots,4\}$, are W$^*$-algebras, $\alpha_1:\N_1\ra\N_3$, $\alpha_2:\N_2\ra\N_4$ are weak-$\star$ continuous $^*$-homomorphisms, then there exists a unique weak-$\star$ continuous $^*$-homomorphism $\alpha:\N_1\underline{\otimes}\N_2\ra\N_3\underline{\otimes}\N_4$ such that $\alpha(x\otimes y)=\alpha_1(x)\otimes\alpha_2(y)$ \cite{Dauns:1972}. The analogous result holds for $\overline{\otimes}$ and weak-$\star$ continuous $^*$-homomorphisms of W$^*$-algebras \cite{Misonou:1954,Takeda:1954,Turumaru:1954}.

The tensor product $(\N_1)_\star\overline{\otimes}_\star(\N_2)_\star$ can be constructed as a projective tensor product of operator spaces \cite{Effros:Ruan:1991,Blecher:Paulsen:1991}, and it satisfies $(\N_1\overline{\otimes}\N_2)_\star\iso(\N_1)_\star\overline{\otimes}_\star(\N_2)_\star$ \cite{Effros:Ruan:1990}. On the other hand, the tensor product $\underline{\otimes}_\star$ satisfies $(\N_1\underline{\otimes}\N_2)_\star\iso(\N_1)_\star\underline{\otimes}_\star(\N_2)_\star$ \cite{Dauns:1972}. 

\section{Categorical tensor products of W$^*$-algebras\label{cat.tensor.Wstar.section}}
Guichardet \cite{Guichardet:1966} introduced a category $\Wsn$ of W$^*$-algebras 
and normal (not necessarily unital) $^*$-ho\-mo\-morph\-isms between them.
A degenerate algebra $\O = \{0\}$,
consisting of only one element, is considered as an object of $\Wsn$.
Clearly, this is a terminal object of $\Wsn$.
It is also an initial object of $\Wsn$, 
since the only linear map $\O \to \A$ is $0 \mapsto 0$.
Consequently, $\O$ is a zero object
and $\Wsn$ has zero morphisms, i.e.\ for any $\A, \B$
there is a unique $0_{\A, \B}\in\Hom(\A, \B)$
defined by $\A \overset!\ra \O \overset!\ra \B$. (In case of category with unital morphisms, 
we do not have the zero object since 
since there is no unital morphism from the $\O$ algebra
to any other algebra.)

For a countable family $\{\A_i\}_{i\in I}$ of W$^*$-algebras 
acting on Hilbert spaces $\H_i$
we define their \emph{product} (cf.~\cite{Dixmier:1957}) $\A = \prod_{i\in I} \A_i$
as a von Neumann algebra which elements are sequences $(a_i)_{i\in I}, a_i\in \A_i$,
such that $\sup_{i\in I}\left\{\n{a_i}\right\} < \infty$,
acting on a direct sum $\H = \bigoplus_{i\in I} \H_i$
in the following way: 
\begin{equation*}
    x = (x_i)_{i\in I} \mapsto a x = (a_ix_i)_{i\in I}.
\end{equation*}
Clearly it is a product in the $\Wsn$ category \cite{Guichardet:1966}:
for any family $u_i\colon \B \to \A_i$ we define $u\colon \B \to \prod_i\A_i$
by $b\mapsto u(b) = (u_i(b))$; 
then $u_i = p_i\circ u$, where $p_i\colon\prod_i\A_i\to\A_i$
are canonical projections.
Moreover, it satisfies following universality property:

\begin{proposition}[\cite{Guichardet:1966}, remark 3.2]
    Let $u_i\colon \A_i\to \B$ be a family of morphisms in $\Wsn$
    such that $u_i(x_i)u_j(x_j) = 0$ for $i\neq j$.
    Then, there exists a unique morphism $u\colon\prod_i \A_i \to \B$
    such that $u_i = u \circ s_i$,
    where $s_i\colon\A_i \to \prod_i \A_i$ are canonical injections.
\end{proposition}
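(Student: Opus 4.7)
The goal is to define $u((a_i)_{i\in I}) := \sum_{i\in I} u_i(a_i)$, with the series converging in the weak-$\star$ topology of $\B$, and then to verify that $u$ is a normal $^*$-homomorphism. The technical heart of the argument is a single boundedness estimate. Setting $e_i := u_i(\II_{\A_i})$, the orthogonality hypothesis applied at the units shows that the $e_i$ are pairwise orthogonal projections in $\B$ and each $u_i(\A_i)$ lies in the corner $e_i\B e_i$. For any $(a_i)\in\prod_i\A_i$ and any finite $F\subseteq I$, the cross terms vanish:
\begin{equation*}
\biggl(\sum_{i\in F}u_i(a_i)\biggr)\biggl(\sum_{j\in F}u_j(a_j)\biggr)^* = \sum_{i\in F}u_i(a_i a_i^*),
\end{equation*}
and the right-hand side, being a sum of positive elements lying in mutually orthogonal corners, has norm equal to $\max_{i\in F}\|u_i(a_i a_i^*)\|\leq\sup_i\|a_i\|^2$. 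Hence $\|\sum_{i\in F}u_i(a_i)\|\leq\sup_i\|a_i\|$ uniformly in $F$.

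Given this estimate, I would construct $u$ by Banach duality. Using the identification $(\prod_i\A_i)_\star\iso\bigoplus^{\ell^1}_i(\A_i)_\star$, it suffices to check that for every $\phi\in\B_\star$ the family $(\phi\circ u_i)_i$ lies in this $\ell^1$-sum: each $\phi\circ u_i$ is normal by normality of $u_i$, and choosing $a_i\in\A_i$ of norm $\leq 1$ with a scalar phase making $\phi(u_i(a_i))$ close to $\|\phi\circ u_i\|$, the above estimate applied to these $a_i$ yields $\sum_{i\in F}\|\phi\circ u_i\|\leq\|\phi\|$. The Banach adjoint of the bounded map $\phi\mapsto(\phi\circ u_i)_i$ is then a weak-$\star$ continuous linear map $u\colon\prod_i\A_i\to\B$ satisfying $\phi(u((a_i)))=\sum_i\phi(u_i(a_i))$ for all $\phi\in\B_\star$, which forces $u\circ s_i = u_i$ by construction.

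It remains to verify multiplicativity, $^*$-preservation, and uniqueness. On the algebraic direct sum of finitely supported sequences, both algebraic properties reduce to the identities $s_i(a)s_j(b)=0=u_i(a)u_j(b)$ for $i\neq j$ combined with the corresponding properties of each $u_i$. The algebraic direct sum is weak-$\star$ dense in $\prod_i\A_i$ (immediate from the predual description), so $^*$-preservation extends by weak-$\star$ continuity of the involution, and multiplicativity extends in two stages (fixing one argument in the algebraic part and approximating the other, then swapping) since multiplication on $\B$ is only separately, not jointly, weak-$\star$ continuous. Uniqueness of $u$ is forced by normality together with the same density. The main obstacle is the boundedness estimate of the first paragraph; everything afterwards is formal bookkeeping with duality and separate weak-$\star$ continuity.
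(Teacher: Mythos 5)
The paper does not actually prove this proposition --- it is quoted from Guichardet (remark 3.2 of \cite{Guichardet:1966}) with no internal argument --- so there is nothing in the text to compare your proof against line by line; judged on its own, your argument is correct and essentially complete. The key points all check out: the hypothesis applied at the units makes the $e_i=u_i(\II)$ mutually orthogonal projections with $u_i(\A_i)\subseteq e_i\B e_i$, the vanishing of cross terms plus the C$^*$-identity $\n{x}^2=\n{xx^*}$ and the fact that $\sum_{i\in F}e_i$ is a projection give the uniform bound $\n{\sum_{i\in F}u_i(a_i)}\leq\sup_i\n{a_i}$, the phase-choice argument turns this into $\sum_i\n{\phi\circ u_i}\leq\n{\phi}$, and the Banach adjoint of $\phi\mapsto(\phi\circ u_i)_i$ under the identification $(\prod_i\A_i)_\star\iso\bigoplus^{\ell^1}_i(\A_i)_\star$ is automatically weak-$\star$ continuous, agrees with the finite sums $\sum_{i\in F}u_i(a_i)$ on finitely supported sequences, and hence satisfies $u\circ s_i=u_i$; the two-stage extension of multiplicativity using separate weak-$\star$ continuity of multiplication, and uniqueness from normality plus weak-$\star$ density of the algebraic direct sum, are both sound. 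For comparison, the route one would expect from Guichardet (who works with von Neumann algebras in concrete representation, as does the paper's definition of $\prod_i\A_i$) is to define $u((a_i))$ directly as the sum $\sum_i u_i(a_i)$ converging in the $\sigma$-weak or strong topology of a faithful normal representation of $\B$, using the same orthogonal projections $e_i$; your duality formulation buys representation independence and makes normality of $u$ automatic from the construction rather than a separate verification, at the price of the (routine) bookkeeping with the $\ell^1$-sum of preduals. The only cosmetic improvement would be to state explicitly that the duality formula $\phi(u((a_i)))=\sum_i\phi(u_i(a_i))$ identifies $u$ with the finite sum on finitely supported sequences, since that identification is what the algebraic verifications and the uniqueness step actually use.
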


Guichardet defined the tensor product in this category by means of the following universal property.

\begin{definition}\label{guichardet.tensor.def}
Let $\N_1,\N_2,\N$ be W$^*$-algebras, let $w_1:\N_1\ra\N$ and $w_2:\N_2\ra\N$ be normal $^*$-homomorphisms such that
\begin{enumerate}
\item[1)] $[w_1(x_1),w_2(x_2)]=0$ $\forall x_1\in\N_1$ $\forall x_2\in\N_2$,
\item[2)] for any W$^*$-algebra $\M$ and any normal $^*$-homomorphisms $t_1:\N_1\ra\M$ and $t_2:\N_2\ra\M$ such that $[t_1(x_1),t_2(x_2)]=0$ $\forall x_1\in\N_1$ $\forall x_2\in\N_2$ there exists a unique normal $^*$-homomorphism $t:\N\ra\M$ such that the following diagram commutes
\begin{equation}
\xymatrix{%
&\N\ar@{..>}[dd]^{t}&\\
\N_1\ar[ur]^{w_1}\ar[dr]_{t_1}&&\N_2\ar[ul]_{w_2}\ar[dl]^{t_2}\\
&\M&
}
\end{equation}
\end{enumerate}
Then $\N$ is denoted as $\N_1\otimes_G\N_2$.
\end{definition}
\begin{proof}
    We have to show that for any pair $\N_1, \N_2$ of W$^*$-algebras
    there exist a $\N_1\otimes_G\N_2$. 
    Let us denote by:
    \begin{align*}
        r_1(a) &= a \otimes \II, \qquad a\in \N_1,\\
        r_2(b) &= \II \otimes b, \qquad b\in \N_2.
    \end{align*}
    maps $r_i\colon\N_i\to \N_1\boxtimes\N_2$.
    We say that $^*$-homomorphism $u\colon \N_1\boxtimes\N_2\to\M$,
    where $\M$ is some W$^*$-algebra
    and $u(\N_1\boxtimes\N_2)$ is weakly-$\star$ dense in $\M$,
    is \emph{normal} whenever both $u\circ r_1, u\circ r_2$
    are normal as a maps $\N_i\to\M$.
    Further, we say that two such normal maps:
    $u\colon\N_1\boxtimes\N_2\to \A$,
    $v\colon\N_1\boxtimes\N_2\to \B$,
    where $\A, \B$ are two arbitrary W$^*$-algebras,
    are equivalent, whenever there exist
    a normal isomorphism $f\colon \A\to\B$
    such that $v = f\circ u$.
    It can be shown (\cite{Guichardet:1966}, Lemma 4.2)
    that equivalence classes of such maps form a set.
    Observe also that there are always at least two such classes,
    represented by maps:
    \begin{equation*}
        p_1(a\otimes b) = a,\qquad p_2(a\otimes b) = b,\qquad 
        \text{and extended by linearity;}
    \end{equation*}
    (clearly $p_i\circ r_j$ are normal).
    Now let us a take one representant $u_j\colon\N_1\boxtimes\N_2\to \M_i$
    out of each of above equivalence classes.
    Denote by $g\colon\N_1\boxtimes\N_2\to\N\subset\prod_i\M_i$
    the $^*$-homomorphism made out of $(u_j)$ 
    and the weak-$\star$ closure of $(\prod_j u_j)(\N_1\boxtimes\N_2)$ in $\prod_i\M_i$.
    By definition it is a W$^*$-algebra.
    Denote by $w_i = g\circ r_i$, for $i=1,2$.

    Let us define a map $u(a\otimes b) = t_1(a)t_2(b)$ and extend by linearity
    to $\N_1\boxtimes\N_2\to\M$. Observe that $u\circ r_1 = t_1(a)t_2(\II)$
    is a composition of two weakly-$\star$ continuous maps ($t_1$ and right
    multiplication by $t_2(\II)$), thus $u\circ r_1$ is also weak-$\star$
    continuous. Analogously $u\circ r_2$ is weak-$\star$ continuous.
    Consequently, $u$ is a normal map. As such, we know that there exists $j$
    such that $u_j$ is equivalent to $u$, i.e.~there exists $f\colon \M_i\to\M$
    such that $u = f\circ u_j$. As a result $t=f\circ p_j$:

    \begin{equation}
    \xymatrix{%
    &\N\ar@{..>}[d]^{p_j}&\\
    \N_1\ar[ur]^{w_1}\ar[dr]_{t_1}&\M_j\ar@{..>}[d]^{f}&\N_2\ar[ul]_{w_2}\ar[dl]^{t_2}\\
    &\M&
    }
    \end{equation}
    Not that although isomorphism $f$ does not have to be unique, the whole
    construction of tensor product is up to isomorphism (we choose $u_j$ from
    equivalence classes).
    This completes the proof of universality.
\end{proof}

Dauns \cite{Dauns:1972} introduced another tensor product in $\Wsn$, denoted in Section \ref{tensor.products.Wstar.section} as $\N_1\underline{\otimes}\N_2$. He showed that $\N_1\underline{\otimes}\N_2$ is characterised by the universal property analogous to one given in Definition \ref{guichardet.tensor.def}, but specified in the category $\Wsun$ of W$^*$-algebras and normal unital $^*$-homomorphisms. Dauns showed also that $(\Wsn,\underline{\otimes},\CC)$ and $(\Wsun,\underline{\otimes},\CC)$ are symmetric monoidal categories. However, the relationship between the tensor products $\otimes_G$ and $\underline{\otimes}$ was left unspecified, so let us fill this gap.

\begin{proposition}
For any $\N_1,\N_2\in\Ob(\Wsn)$ there is a normal unital $^*$-isomorphism $\N_1\otimes_G\N_2\iso\N_1\underline{\otimes}\N_2$.
\end{proposition}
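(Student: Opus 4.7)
The strategy is to show that $\N_1 \underline{\otimes} \N_2$, equipped with its canonical maps $\iota_1(a) = a\otimes\II$ and $\iota_2(b) = \II\otimes b$, itself satisfies the universal property of Definition~\ref{guichardet.tensor.def}; the required normal unital $^*$-isomorphism then follows from the standard uniqueness-of-universal-object argument applied to the Guichardet tensor product.

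First, I would record that $\iota_1, \iota_2$ are normal unital $^*$-homomorphisms with pointwise commuting ranges, which handles clause~(1) of the definition. For clause~(2), given a test datum $(\M, t_1, t_2)$ with commuting ranges in $\M$, I would observe that any hypothetical lift $t\colon \N_1\underline{\otimes}\N_2 \to \M$ with $t\circ\iota_i = t_i$ is forced to satisfy $t(\II) = t_1(\II) = t_2(\II)$, so that a lift can exist only when $t_1(\II) = t_2(\II) =: e$ for a common projection $e \in \M$. Under this necessary condition, each $t_i$ actually lands in the corner W$^*$-subalgebra $e\M e$, and the corestricted maps $\N_i \to e\M e$ are unital normal $^*$-homomorphisms with commuting ranges.

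At this point Dauns' version of the universal property, quoted in the paragraph preceding the proposition for the category $\Wsun$ of W$^*$-algebras with normal unital $^*$-homomorphisms, produces a unique normal unital $^*$-homomorphism $\tilde t\colon \N_1\underline{\otimes}\N_2 \to e\M e$ with $\tilde t\circ\iota_i = t_i$. Post-composing with the weak-$\star$ continuous inclusion $e\M e \hookrightarrow \M$ yields the required $t\colon \N_1\underline{\otimes}\N_2 \to \M$; its uniqueness follows because any two such lifts coincide on the $^*$-subalgebra generated by $\iota_1(\N_1)\cup\iota_2(\N_2)$, which is weak-$\star$ dense in $\N_1\underline{\otimes}\N_2$, and both are normal.

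The principal obstacle is the reduction from the potentially non-unital setup of $\Wsn$, where one might have $t_1(\II)\neq t_2(\II)$ a priori, to the unital setting directly covered by Dauns' theorem. This is resolved by the diagrammatic observation above, which forces the two projections to coincide whenever a lift can exist and thereby legitimises the corner-restriction argument. Once the universal property of Definition~\ref{guichardet.tensor.def} is verified for $\N_1 \underline{\otimes} \N_2$, applying the Guichardet universal property in the opposite direction to the cocone $(\N_1\underline{\otimes}\N_2,\iota_1,\iota_2)$ yields the inverse normal $^*$-homomorphism, and the resulting $^*$-isomorphism is unital because both $\iota_i$ and both $w_i$ are.
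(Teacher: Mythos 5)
Your route differs from the paper's: you try to show that $\N_1\underline{\otimes}\N_2$ itself satisfies Definition~\ref{guichardet.tensor.def} and then invoke uniqueness of universal objects, whereas the paper never does this --- it builds the two mutually inverse morphisms directly, getting $f$ from the universal property of $\otimes_G$ applied to the unital cocone $(v_1,v_2)$, and $h$ from Dauns' universal property in $\Wsun$ applied to $(w_1,w_2)$, after first proving that the $w_i$ are unital by a weak-$\star$ density argument ($u_j(\II)$ acts as a unit on the weakly-$\star$ dense image $u_j(\N_1\boxtimes\N_2)$, hence $u_j(\II)=\II$). Your corner trick (force $t_1(\II)=t_2(\II)=e$, corestrict to $e\M e$, apply Dauns, include back into $\M$) is a genuinely different and serviceable device for bridging $\Wsn$ and $\Wsun$, and your uniqueness argument (agreement on the weak-$\star$ dense $^*$-subalgebra generated by the images, plus normality) is fine.

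There is, however, a real gap in the headline step. Clause 2) of Definition~\ref{guichardet.tensor.def} quantifies over \emph{all} normal $t_1,t_2$ with commuting ranges, with no compatibility condition on the units; your construction produces the lift only under the extra hypothesis $t_1(\II)=t_2(\II)$, and, as your own ``necessary condition'' remark indicates, the unrestricted property is simply false for $\N_1\underline{\otimes}\N_2$: take $\N_1=\N_2=\CC$, $\M=\CC\oplus\CC$, $t_1(\lambda)=(\lambda,0)$, $t_2(\lambda)=(0,\lambda)$ --- the ranges commute, but no lift through the unital $\iota_i$ exists (the same example rules out any candidate with unital structure maps, so only a restricted reading of clause 2) can be meant). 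Consequently you cannot literally ``apply the standard uniqueness-of-universal-object argument'' against the definition as stated; what you have verified is a restricted universal property, and to run the uniqueness argument you must check that the comparison cocones themselves satisfy the unit condition. For $(\iota_1,\iota_2)$ this is trivial; for the Guichardet cocone it holds because $w_1(\II)=g(\II\otimes\II)=w_2(\II)$, a point you never state. You also assert without proof that the $w_i$ are unital --- exactly the fact the paper has to establish by the density argument above; for the unitality of the final isomorphism you could instead observe that a surjective $^*$-isomorphism between unital algebras is automatically unital. With these two points made explicit (use the restricted property plus $w_1(\II)=w_2(\II)$ to get $h$, then uniqueness on both sides to get $f\circ h=\id$ and $h\circ f=\id$), your proof closes.
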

\begin{proof}
    Universality of $\N_1\underline{\otimes}\N_2$ (\cite{Dauns:1972}, 4.8) means that
    for any unital $^*$-homomorphisms $\alpha\colon\N_1\to\M$ and $\beta\colon\N_2\to\M$,
    such that $[\alpha(\N_1),\beta(\N_2)] = 0$
    there exists a unique unital $^*$-homomorphism such that
    the following diagram commutes:
    \begin{equation*}
        \xymatrix{
            & \M &\\
            \N_1 \ar[r]^{v_1}\ar[ur]^{\alpha}
            & \N_1\underline{\otimes}\N_2\ar@{..>}[u]^{f}
            & \ar[l]_{v_2}\ar[ul]_{\beta} \N_2\\
        }
    \end{equation*}
    where $v_1, v_2$ are natural inclusions of $\N_1,\N_2$ into $\N_1\underline{\otimes}\N_2$. 

    Let $g, (u_j), w_i, r_i$ be defined as in the proof of Def.~\ref{guichardet.tensor.def}.
    From the $^*$-homomorphism property we have that for
    \begin{equation*}
      u_j(\II)u_j(a) = u_j(a) = u_j(a)u_j(\II)\qquad\forall a\in\N_1\boxtimes\N_2,
    \end{equation*}
    since the image of $u_j$ is weakly-$\star$ dense in $\M_j$
    and since $u_j(\II)u_j(a)$ is weakly-$\star$ continuous
    (composition of weakly-$\star$ continuous $u_j$ and left multiplication),
    we can extend this equality by continuity to $\M_j$.
    Consequently, $u_j(\II) = \II$ and thus
    $g(\II) = \II$ and $w_i = g\circ r_i$ are unital. 
    Then the diagram
    \begin{equation*}
        \xymatrix{
            & \N_1\underline{\otimes}\N_2 &\\
            \N_1 \ar[ur]^{v_1}\ar[r]^{w_1}\ar[dr]_{v_1} 
            & \N_1\otimes_G\N_2\ar@{..>}[u]^{f}
            & \ar[ul]_{v_2}\ar[l]_{w_2}\ar[dl]^{v_2}\N_2\\
            & \N_1\underline{\otimes}\N_2\ar@{..>}[u]^{h}
        }
    \end{equation*}
    commutes,
    where existence of unique $h$ follows from universality of $\underline{\otimes}$
    and existence of unique $f$ follows from universality of $\otimes_G$.
    From universality of $\underline{\otimes}$ the whole diagram yields that
    $f\circ h = \id_{\N_1\underline{\otimes}\N_2}$.
    The other way follows analogously.
\end{proof}

\noindent\textbf{Acknowledgments.} RPK would like to thank L.Hardy, R.Kunjwal, and J.Lewandowski for hosting him as a scientific visitor. This research was supported in part by Perimeter Institute for Theoretical Physics. Research at Perimeter Institute is supported by the Government of Canada through Industry Canada and by the Province of Ontario through the Ministry of Research and Innovation.

{\small
\section*{References}
\begingroup
\raggedright
\renewcommand\refname{\vskip -1cm}

\endgroup        
}
\end{document}